
\documentclass[conference,letterpaper]{IEEEtran}

\addtolength{\topmargin}{9mm}

%
%
\usepackage[utf8]{inputenc} 
\usepackage[T1]{fontenc}
\usepackage{url}
\usepackage{ifthen}
\usepackage{cite}
\usepackage[cmex10]{amsmath} 


\interdisplaylinepenalty=2500 

\hyphenation{op-tical net-works semi-conduc-tor}

\usepackage{xcolor}
\usepackage{graphicx}
\usepackage{physics}
\usepackage{amssymb}
\usepackage{comment}
\usepackage{algorithm}
\usepackage{algpseudocode}
\usepackage[colorlinks,pdfstartview=FitH,citecolor=blue,linkcolor=blue,urlcolor=blue]{hyperref}
\usepackage{amsthm}
\newtheorem{lemma}{Lemma}
\newtheorem{theorem}{Theorem}
\newcommand{\Gr}{\text{Gr}}
\newcommand{\CS}{\text{CS}}
\newcommand{\Id}{\text{Id}}
\newcommand{\SWAP}{\text{SWAP}}

\newcommand{\F}{\mathbb{F}}
\newcommand{\C}{\mathbb{C}}
\newcommand{\E}{\mathbb{E}}
\newcommand{\mcH}{\mathcal{H}}
\newcommand{\mcX}{\mathcal{X}}

\makeatletter
\renewcommand{\fnum@algorithm}{} 
\makeatother

\begin{document}
\title{Winning Rates of $(n,k)$ Coset Monogamy Games} 

\author{%
  \IEEEauthorblockN{Michael Schleppy and Emina Soljanin}
  \IEEEauthorblockA{
  Rutgers University\\
                    Email: \{michael.schleppy, emina.soljanin\}@rutgers.edu
                    }
    \thanks{M.~Schleppy and E.~Soljanin are with the Department of Electrical and Computer Engineering, Rutgers, the State University of New Jersey, Piscataway, NJ 08854, USA, e-mail: \{michael.schleppy, emina.soljanin\}@rutgers.edu).}
    \thanks{This  research  is in part based  upon  work  supported  by  the  National  Science  Foundation  under  Grant  \# FET-2007203}
}


\IEEEoverridecommandlockouts

\maketitle


\begin{abstract}
    We formulate the $(n,k)$ Coset Monogamy Game, in which two players must extract complementary information of unequal size ($k$ bits vs.\ $n-k$ bits) from a random coset state without communicating. The complementary information takes the form of random Pauli-X and Pauli-Z errors on subspace states. Our game generalizes those considered in previous works that deal with the case of equal information size $(k=n/2)$. We prove a convex upper bound of the information-theoretic winning rate of the $(n,k)$ Coset Monogamy Game in terms of the subspace rate $R=\frac{k}{n}\in [0,1]$. This bound improves upon previous results for the case of $R=1/2$, in part due to a structural result we prove on subspace permutations, which may have broader combinatorial interest. We also prove the achievability of an optimal winning probability upper bound for the class of unentangled strategies of the $(n,k)$ Coset Monogamy Game.
\end{abstract}

\begin{IEEEkeywords}
\noindent
quantum correlations, quantum games, coset states, subspace states, monogamy-of-entanglement, subspace permutations
\end{IEEEkeywords}

\section{Introduction}
A valuable concept in Quantum Security is using a random basis to encode classical information into a quantum state. This feature plays an indispensable role in cryptographic schemes such as the BB84 Quantum Key Distribution (QKD) \cite{QKD:bennett2014quantum}, or Quantum Money \cite{QMoney:wiesner1983conjugate}: the classical data encoded in a quantum state $\ket{\psi}$ remains 'hidden' to anyone in possession of $\ket{\psi}$ without knowledge of the basis in which it was prepared (the encoding basis). In many quantum cryptographic protocols, the encoding basis is revealed once the transmitted quantum state is in the possession of a trusted receiver, allowing the classical information to be extracted from the quantum state. A natural question arises about whether arbitrary quantum states can be replicated so that separated parties can simultaneously, without communicating, extract the same classical data after the encoding basis is announced. The no-cloning theorem \cite{NoCloning:wootters1982single, NoCloning:dieks1982communication} asserts that such replication is generally impossible without knowledge of the encoding basis. This impossibility prevents, for instance, undetectable extraction of secret key bits in a QKD protocol or illegal duplication of a Quantum Banknote in a Quantum Money scheme.

In this work, we focus on encoding bases consisting of coset states. The coset states in \eqref{eq:CosetState} are a natural generalization of the BB84 states $\{\ket{0}, \ket{1}, \ket{+}, \ket{-}\}$ over $n$-qubits, defined by a subspace $W \subseteq \F_2^n$ and bit strings $x,z \in \F_2^n$. Each subspace $W$ determines a unique basis of coset states over $n$-qubits, whereas $(x,z)$ can together encode up to $n$ classical bits of information (see Section \ref{sec:prelim}). A useful feature of coset states is the ability to extract partial information about $x$ or $z$ (but not both) without knowing the basis. This is possible by either measuring in the computational basis on every qubit or measuring in the Hadamard basis on every qubit, respectively. However, measuring coset states in the computational basis destroys the encoded information of $z$, and measuring coset states in the Hadamard basis destroys the encoded information of $x$. Nevertheless, this property is desirable in various one-time Quantum Money protocols \cite{QMoney:ben2023quantum,QMoney:vidick2021classical,QMoney:coladangelo2021hidden}. In contrast to BB84 states over $n$-qubits, coset states may exhibit a degree of entanglement between qubits, allowing the information bits of $x$ and $z$ to be 'spread out' over multiple qubits instead of being localized at a single qubit. The encoding circuits for coset states were discussed in \cite{Game:schleppy2024optimal}.

Coset states have been investigated for their utility in cryptographic protocols. In \cite{QMoney:vidick2021classical}, the authors show that an adapted version of the Quantum Money scheme of \cite{QMoney:aaronson2012quantum} using coset states (referred to as one-time padded subspace states) constitutes a Proof of Quantum Knowledge protocol. The authors of \cite{QMoney:coladangelo2021hidden} establish a signature token scheme using a computational direct hardness property for coset states. They also demonstrate an unclonable decryption scheme using indistinguishability obfuscation and post-quantum one-way functions, conditional on an information-theoretic conjectured monogamy property of coset states. This conjectured monogamy property can be formulated in terms of a coset guessing game, in which Alice prepares a random coset state using a random subspace $W \subseteq \F_2^n$ of dimension $\frac{n}{2}$, before sending the state over a fixed quantum channel $\Phi$ to Bob and Charlie and announcing $W$ publicly. In \cite{Game:culf2022monogamy}, the conjectured monogamy property was proven by showing that the probability that Bob and Charlie can simultaneously determine $x$ and $z$ respectively up to a coset of $W$ and the dual space $W^\perp$ decays exponentially as $n$ increases. In \cite{Game:schleppy2024optimal}, optimal measurements in the coset guessing game were found for channels that split the qubits equally between Bob and Charlie. 

In the present work, we generalize the quantum coset guessing game of \cite{Game:culf2022monogamy,Game:schleppy2024optimal} to a more general setting called the $(n,k)$ Coset Monogamy Game, where the coset states are prepared in bases of subspaces $W \subseteq \F_2^n$ of a fixed dimension $0 \leq k \leq n$. This introduces an inequality in the number of bits of information to be guessed by Bob and Charlie, so that Bob needs to guess $n-k$ bits of information, and Charlie needs to guess $k$ bits of information. Previous works \cite{Game:culf2022monogamy,Game:schleppy2024optimal} have considered solely the case where $k=\frac{n}{2}$. The winning rate $\omega(R)$ for $R \in [0,1]$ is particularly interesting. It represents an information-theoretic quantity that relates the probability of a simultaneous correct guess per qubit to $n$ going to infinity. Phrased in terms of the subspace rate $R \in [0,1]$, Bob needs to guess approximately $n(1-R)$ bits of information, and Charlie needs to guess approximately $nR$ bits of information. We prove a convex upper bound for the winning rate $\omega(R) \leq 2^{-\frac{1}{2}R^*}$, where $R^* = \min \{R,1-R\}$.

The paper is organized as follows: In Section \ref{sec:prelim}, we introduce some basic preliminaries, notation, and some technical lemmas. To our knowledge, Lemma \ref{lem:subspacePermutations} appears to be novel, and we believe that the result may be of independent interest for subspace coding and designs. In Section \ref{sec:def}, we define the $(n,k)$ Coset Monogamy Game. In Section \ref{sec:thm}, we prove the main theorem regarding the winning rates. Finally, in Section \ref{sec:other}, we prove an optimal upper bound for a class of unentangled strategies for the extended non-local formulation of the $(n,k)$ Coset Monogamy Game.


\section{Preliminaries and Notation}\label{sec:prelim}
Throughout the paper, $k,n$ are integers with $0 \leq k \leq n$. The Grassmannian $\Gr_2(n,k)$ denotes the set of $k$-dimensional subspaces of the $n$-dimensional vector space $\F_2^n$. The rate of a subspace $W\in \Gr_2(n,k)$ is given by $R=\frac{k}{n}$. The Gaussian binomial coefficient $\binom{n}{k}_2 = \frac{\prod_{i=0}^{k-1}2^{n-i} - 1}{\prod_{i=0}^{k-1}2^{k-i} - 1}$ gives the cardinality of $\Gr_2(n,k)$. For a subspace $W \in \Gr_2(n,k)$ and a vector $x \in \F_2^n$, the coset of $W$ containing $x$ is denoted by $x+W$. We choose one representative from each coset of $W$ and identify each coset by its representative. The set $\CS(W)$ will denote the set of coset representatives for $W$. 

We will use the standard bilinear form on $\F_2^n$ defined by $x \cdot y = \sum_{i=1}^n x_iy_i$. For a subspace $W \in \Gr_2(n,k)$, the dual space is defined by $W^\perp = \{y \in \F_2^n : x \cdot y = 0 \quad \forall x \in W\}$, with $\dim(W^\perp) = n-k$. A vector $z \in \F_2^n$ defines a character $\chi_z : \F_2^n \to \F_2$ by the mapping $\chi_z(x) = z \cdot x$. When restricted to a subspace $W$, two characters $\chi_z, \chi_{z'}$ are equivalent iff $z-z' \in W^\perp$ (i.e. they belong to the same coset of $W^{\perp}$). We, therefore, identify characters over $W$ by the set of coset representatives of $W^\perp$, namely $\CS(W^\perp)$.

Given a subspace $W \in \Gr_2(n,k)$, we define the subspace state $\ket{W} \in (\C^2)^{\otimes n}$ over $n$ qubits by
\begin{equation}\label{eq:SubspaceState}
    \ket{W} = \frac{1}{\sqrt{2^k}} \sum_{u \in W} \ket{u}
\end{equation}
Additionally, given vectors $x,z \in \F_2^n$, we define the coset state $\ket{W_{x,z}} \in (\C^2)^{\otimes n}$ over $n$ qubits by
\begin{equation}\label{eq:CosetState}
    \ket{W_{x,z}} = \frac{1}{\sqrt{2^k}} \sum_{u \in W}(-1)^{z \cdot u} \ket{x+u} = X^xZ^z\ket{W}
\end{equation}
where $X^x= \bigotimes_{i=1}^n X^{x_i}$ and $Z^z= \bigotimes_{i=1}^n Z^{z_i}$. In this view, coset states are subspace states with Pauli-X and Pauli-Z errors applied to the $n$ qubits according to $x$ and $z$, respectively. Two coset states over the same subspace $\ket{W_{x,z}}, \ket{W_{x',z'}}$ are equivalent up to a global phase when $x-x' \in W$ and $z-z' \in W^{\perp}$, and are orthogonal otherwise. We may therefore associate to each subspace $W$ a basis $\{ \ket{W_{x,z}}\}_{x,z}$ of $(\C^2)^{\otimes n}$ for $x \in \CS(W)$ and $z \in \CS(W^\perp)$, where $x$ encodes $n-k$ bits of information and $z$ encodes $k$ bits of information. The norm of the inner product of two coset states, in general, is given by
\begin{equation}\label{eq:innerProd}
    |\bra{V_{x,z}} \ket{W_{x',z'}}| = 
    \begin{cases}
        2^{\dim(V \cap W) - k} & \text{if } \substack{x-x' \in V+W \\ z-z' \in V^\perp + W^\perp}\\
        0 & \text{otherwise}
    \end{cases}
\end{equation}
Applying the $n$-fold Hadamard operator $H^{\otimes n}$ to a coset state, we have the dual relation $H^{\otimes n} \ket{W_{x,z}} = \ket{W_{z,x}^\perp}$.

A quantum state on a finite-dimensional Hilbert space $\mcH$ can be described by a density matrix $\rho$ (a positive semidefinite matrix on $\mcH$ with unit trace). A rank-one density matrix $\ket{\psi}\bra{\psi}$ is termed a pure state. A quantum channel $\Phi: \mcH_1 \to \mcH_2$ defines an operation on a quantum state, which is a completely positive and trace-preserving linear map. A Positive Operator-Valued Measure (POVM) on a Hilbert space $\mcH$ is a set of positive semidefinite matrices $\{P_x\}_{x \in \mcX}$ defined on $\mcH$ satisfying $\sum_{x \in \mcX} P_x = \Id_\mcH$. In the case that each $\{P_x\}_{x \in \mcX}$ are orthogonal projections, that is, $P_xP_{x'} = \delta_{xx'}P_x$, we say that $\{P_x\}_{x \in \mcX}$ is a Projective-Value Measure (PVM). If $\rho$ is a quantum state in $\mcH$, then the probability of obtaining the result $x \in \mcX$ when measuring $\rho$ with the POVM $\{P_x\}_{x \in \mcX}$ is $p_x = \Tr[P_x\rho]$. For operators $P$ and $Q$, we write $0\leq P$ when $P$ is positive semidefinite and $P \leq Q$ when $0 \leq Q-P$.

Finally, we list four lemmas that we will use to prove the main result of this paper. The omitted proofs can be found in the cited references.

\begin{lemma}\label{lem:NormOfSum}
    \cite[Lemma 2]{Game:tomamichel2013monogamy} Let $\{P_i\}_{i=1}^n$ be a set of positive semidefinite operators on a finite-dimensional Hilbert space $\mcH$. For any set of mutually orthogonal permutations $\{\pi_i\}_{i=1}^n$ (that is, $\pi_i(k) \neq \pi_j(k)$ for $i \neq j$), we have
    \begin{equation}
        \| \sum_{i=1}^n P_i \| \leq \sum_{j = 1}^n \max_{1 \leq i \leq n} \|\smash[b]{\sqrt{P_i}} \sqrt{\smash[b]{P_{\pi_j(i)}}}\|
    \end{equation}
\end{lemma}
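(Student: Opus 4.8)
The plan is to recast the bound on $\|\sum_{i=1}^nP_i\|$ as a bound on the operator norm of an $n\times n$ array of operators and then exploit the Latin-square structure of the family $\{\pi_j\}_{j=1}^n$. First I would introduce the column operator $A\colon\mcH\to\C^{n}\otimes\mcH$, $A\ket{\psi}=\sum_{i=1}^n\ket{i}\otimes\sqrt{P_i}\ket{\psi}$, where $\sqrt{P_i}$ denotes the positive semidefinite square root. A direct computation gives $A^\dagger A=\sum_{i=1}^nP_i$, so by the $C^{*}$ identity $\|\sum_{i=1}^nP_i\|=\|A\|^{2}=\|AA^\dagger\|$, and $AA^\dagger=\sum_{i,l=1}^n\ket{i}\bra{l}\otimes\sqrt{P_i}\sqrt{P_l}$ is precisely the block matrix $M$ whose $(i,l)$ block equals $\sqrt{P_i}\sqrt{P_l}$. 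Thus it suffices to bound $\|M\|$.

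Next I would split $M$ along the permutations. For each fixed $i$, the hypothesis $\pi_a(i)\neq\pi_b(i)$ for $a\neq b$ forces $j\mapsto\pi_j(i)$ to be a bijection of $\{1,\dots,n\}$, so every block position $(i,l)$ equals $(i,\pi_j(i))$ for exactly one $j$. Letting $M^{(j)}$ be the array supported on $\{(i,\pi_j(i))\}_{i=1}^n$ with block $\sqrt{P_i}\sqrt{P_{\pi_j(i)}}$ in position $(i,\pi_j(i))$, this gives $M=\sum_{j=1}^nM^{(j)}$, whence $\|M\|\le\sum_{j=1}^n\|M^{(j)}\|$ by the triangle inequality.

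Finally, since each $\pi_j$ is itself a permutation, $M^{(j)}$ has exactly one nonzero block in every block row and every block column; right-multiplying by the block-permutation unitary $U_j=\sum_{i=1}^n\ket{\pi_j(i)}\bra{i}\otimes\Id_{\mcH}$ turns it into the block-diagonal operator $\bigoplus_{i=1}^n\sqrt{P_i}\sqrt{P_{\pi_j(i)}}$. Since the operator norm is unitarily invariant and equals the largest block norm for a block-diagonal operator, $\|M^{(j)}\|=\max_{1\le i\le n}\|\sqrt{P_i}\sqrt{P_{\pi_j(i)}}\|$; chaining the three estimates yields the claim.

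I do not expect a genuine obstacle here: the argument is bookkeeping built on the $C^{*}$ identity and unitary invariance of the norm. The two points that need care are (i) verifying in the splitting step that the $M^{(j)}$ tile the blocks of $M$ exactly once each, which is precisely what mutual orthogonality of the $\pi_j$ buys, and (ii) getting the permutation right in the last step so that $M^{(j)}U_j$ is genuinely block diagonal. An alternative that sidesteps operator-valued matrices is to bound $\bra{\psi}\bigl(\sum_{i}P_i\bigr)\ket{\psi}$ for a unit vector $\ket{\psi}$ by distributing it over the $n$ permutations and applying Cauchy--Schwarz, but the dilation route keeps the role of the $\pi_j$ most transparent and is the cleanest route.
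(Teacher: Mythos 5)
Your proof is correct. The paper does not prove this lemma itself---it defers to \cite[Lemma 2]{Game:tomamichel2013monogamy}---and your dilation argument (realizing $\sum_i P_i$ as $A^\dagger A$, passing to the block matrix $AA^\dagger=[\sqrt{P_i}\sqrt{P_l}]_{i,l}$, tiling it by the permutations, and reducing each piece to block-diagonal form by a permutation unitary) is essentially the proof given in that reference.
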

We can omit the square roots when $\{P_i\}_{i=1}^n$ are projections. The next lemma extends the result of \cite[Lemma 3.2]{Game:culf2022monogamy} to arbitrary $0 \leq k \leq n$:

\begin{lemma}\label{lem:prodOfSumOfCosetStates}
    For $V,W \in \Gr_2(n,k)$, $z \in \CS(V^{\perp})$ and $x' \in \CS(W)$ we have
    \begin{align*}
        \Bigl\| \sum_{x \in \CS(V)} \ket{V_{x,z}}\bra{V_{x,z}}\!\! \sum_{z' \in \CS(W^\perp)} \ket{W_{x',z'}}&\bra{W_{x',z'}}\Bigr \|\\
        &\leq
        \sqrt{2^{\dim(V \cap W) - k}}
    \end{align*}
\end{lemma}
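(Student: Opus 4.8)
The plan is to replace the two operators in the product by completely explicit ones and then estimate the operator norm of their product directly. Write $A := \sum_{x\in\CS(V)}\ket{V_{x,z}}\bra{V_{x,z}}$ and $B := \sum_{z'\in\CS(W^\perp)}\ket{W_{x',z'}}\bra{W_{x',z'}}$; by \eqref{eq:innerProd} applied with $V=W$, the families $\{\ket{V_{x,z}}\}_{x\in\CS(V)}$ and $\{\ket{W_{x',z'}}\}_{z'\in\CS(W^\perp)}$ are orthonormal, so $A$ and $B$ are orthogonal projections.

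The first step, which is also the main one, is to compute $A$ and $B$ explicitly. I claim
\[ B = X^{x'}\Pi_W X^{x'} = \Pi_{x'+W}, \qquad A = \frac{1}{2^{k}}\sum_{d\in V}(-1)^{z\cdot d}X^{d}, \]
where $\Pi_W := \sum_{u\in W}\ket{u}\bra{u}$ and $\Pi_{x'+W}$ is the projection onto $\Span\{\ket{y}:y\in x'+W\}$. For $B$: since $\ket{W_{x',z'}}=X^{x'}Z^{z'}\ket{W}$ and $\{Z^{z'}\ket{W}\}_{z'\in\CS(W^\perp)}$ is an orthonormal set of $2^{k}$ vectors lying in the $2^{k}$-dimensional coordinate subspace $\Span\{\ket{u}:u\in W\}$, it spans that subspace, and applying the unitary $X^{x'}$ gives the stated form of $B$. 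For $A$: using that $(x,u)\mapsto x+u$ is a bijection $\CS(V)\times V\to\F_2^{n}$ one gets $\sum_{x\in\CS(V)}X^{x}\ket{V}\bra{V}X^{x}=\tfrac{1}{2^{k}}\sum_{d\in V}X^{d}$; then, moving the $Z^{z}$ factors in $\ket{V_{x,z}}\bra{V_{x,z}}=X^{x}Z^{z}\ket{V}\bra{V}Z^{z}X^{x}$ outward (the two Pauli phases cancel) and using $Z^{z}X^{d}Z^{z}=(-1)^{z\cdot d}X^{d}$ yields the stated form of $A$; a one-line idempotency check reconfirms $A^{2}=A$.

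With these forms in hand, $\|AB\|^{2}=\|BAB\|$ since $A$ and $B$ are Hermitian projections; substituting $B=X^{x'}\Pi_WX^{x'}$, using unitary invariance of the norm, and noting $X^{x'}AX^{x'}=A$ (all $X$-type Paulis commute) collapses this to $\|\Pi_WA\Pi_W\|$. Now $\Pi_WX^{d}\Pi_W=0$ unless $d\in W$, in which case it equals $X^{d}\Pi_W$, so only the terms with $d\in V\cap W$ survive:
\[ \Pi_WA\Pi_W=\frac{1}{2^{k}}\sum_{d\in V\cap W}(-1)^{z\cdot d}X^{d}\,\Pi_W=2^{\dim(V\cap W)-k}\,\Pi_W P\Pi_W, \]
where $P:=2^{-\dim(V\cap W)}\sum_{d\in V\cap W}(-1)^{z\cdot d}X^{d}$ is itself an orthogonal projection (the same idempotency computation, using that $d\mapsto z\cdot d$ is linear on $V\cap W$). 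Since $\Pi_W$ and $P$ are projections, $\|\Pi_WP\Pi_W\|=\|\Pi_WP\|^{2}\le 1$, and therefore $\|AB\|^{2}=\|\Pi_WA\Pi_W\|\le 2^{\dim(V\cap W)-k}$, which is the asserted bound.

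I expect the main obstacle to be this first step, specifically the Pauli-phase bookkeeping that shows $A$ is an orthogonal projection of the stated form despite mixing $X$- and $Z$-type Paulis, together with the elementary character-sum identities $\sum_{d\in V}(-1)^{d\cdot y}=2^{k}[\,y\in V^\perp\,]$ and $\sum_{z'\in\CS(W^\perp)}(-1)^{z'\cdot w}=2^{k}[\,w=0\,]$ for $w\in W$ that underlie it. Once $A$ and $B$ are pinned down, the remaining steps are routine operator-norm manipulations for products of projections, and the factor $2^{\dim(V\cap W)-k}$ emerges simply as the normalization $2^{\dim(V\cap W)}/2^{k}$ relating the unnormalized Pauli sum over $V\cap W$ to the projection $P$.
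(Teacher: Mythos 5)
Your proof is correct. It shares the paper's opening move—identifying $\sum_{z'}\ket{W_{x',z'}}\bra{W_{x',z'}}$ with the coordinate projection $\Pi_{x'+W}$ and reducing $\|AB\|^2$ to $\|\Pi_{x'+W}A\,\Pi_{x'+W}\|$—but evaluates that norm by a genuinely different mechanism. The paper keeps $A$ as a sum of rank-one projectors, observes that the compressed terms $\Pi_{x'+W}\ket{V_{x,z}}\bra{V_{x,z}}\Pi_{x'+W}$ have mutually orthogonal supports (the disjoint sets $(x+V)\cap(x'+W)$), bounds the norm of the sum by the maximum term, and finishes with the coset-intersection count $|(x+V)\cap(x'+W)|/2^k=2^{\dim(V\cap W)-k}$. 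You instead expand $A=\tfrac{1}{2^k}\sum_{d\in V}(-1)^{z\cdot d}X^d$ in the Pauli (group-algebra) basis, conjugate away $X^{x'}$, and note that compression by $\Pi_W$ kills every $X^d$ with $d\notin W$, leaving exactly $2^{\dim(V\cap W)-k}$ times a projection; I checked the expansion of $A$, the phase cancellation $X^xZ^z\ket{V}\bra{V}Z^zX^x=Z^z(X^x\ket{V}\bra{V}X^x)Z^z$, and the idempotency of $P$, and all are sound. Your route trades the combinatorial counting argument for an algebraic one; it is slightly longer to set up but makes the constant's origin ($2^{\dim(V\cap W)}$ surviving Pauli terms against the $2^{-k}$ normalization) transparent and cleanly exhibits $\Pi_W A\Pi_W$ as a scalar multiple of a product of projections, whereas the paper's version is shorter because it can lean on the already-established Lemma 3.2 of the cited reference.
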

\begin{proof}
    We have $\Pi_{x'+W} := \sum_{z' \in \CS(W^\perp)} \ket{W_{x',z'}}\bra{W_{x',z'}} = \sum_{w \in x'+W} \ket{w}\bra{w}$, which produces the same chain of expressions as in Lemma 3.2 of \cite{Game:culf2022monogamy}, except for the factor of $\frac{1}{2^k}$ at the end:
    \begin{align*}
         &\Bigl\| \sum_{x \in \CS(V)} \ket{V_{x,z}}\bra{V_{x,z}}\!\! \sum_{z' \in \CS(W^\perp)} \ket{W_{x',z'}}\bra{W_{x',z'}}\Bigr \|^2\\
         &=\Bigl\| \Pi_{x'+W} \Bigl (\sum_{x \in \CS(V)} \ket{V_{x,z}}\bra{V_{x,z}} \Bigr ) \Pi_{x'+W}\Bigr \|\\
         & \leq \max_{x \in CS(V)} \Bigl\| \Pi_{x'+W} \ket{V_{x,z}}\bra{V_{x,z}} \Pi_{x'+W}\Bigr \|\\
         &= \max_{x \in CS(V)} \bra{V_{x,z}} \Pi_{x'+W} \ket{V_{x,z}}\\
         &= \max_{x \in CS(V)} \frac{|(x+V)\cap(x'+W)|}{2^k} = 2^{\dim(V \cap W) - k}
    \end{align*}
\end{proof}

The next two lemmas concern intersections of $k$-dimensional subspaces of $\F_2^n$. Lemma \ref{lem:subspaces} is derived from results of finite geometry that can be found, for instance, in \cite{Subspace:braun2018q}:

\begin{lemma}\label{lem:subspaces}
    For a subspace $W \in \Gr_2(n,k)$ and $0\leq m \leq k$, there are $2^{(k-m)^2}\binom{n-k}{k-m}_2\binom{k}{m}_2$ subspaces $V \in \Gr_2(n,k)$ with $\dim(V \cap W) = m$.
\end{lemma}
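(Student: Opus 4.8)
The plan is to count the subspaces $V$ by first fixing the intersection $U := V\cap W$ and then counting the admissible $V$ for each choice of $U$. Since $U$ ranges over all $m$-dimensional subspaces of $W$ and $\dim W = k$, there are exactly $\binom{k}{m}_2$ possibilities for $U$. It therefore suffices to show that for each fixed $m$-dimensional $U\subseteq W$, the number of $V\in\Gr_2(n,k)$ with $V\cap W = U$ equals $2^{(k-m)^2}\binom{n-k}{k-m}_2$; multiplying the two counts then gives the stated formula. Note that $V\cap W = U$ forces $U\subseteq V$, so we may parametrize such $V$ among the $k$-dimensional subspaces containing $U$.

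For a fixed $U$, I would pass to the quotient space $\bar\F := \F_2^n/U$, of dimension $n-m$, and set $\bar W := W/U$, of dimension $k-m$. The map $V\mapsto V/U$ is a bijection from the $k$-dimensional subspaces of $\F_2^n$ containing $U$ onto the $(k-m)$-dimensional subspaces of $\bar\F$, and under it the condition $V\cap W = U$ becomes $(V/U)\cap\bar W = \{0\}$, since $(V\cap W)/U = (V/U)\cap(W/U)$ whenever $U\subseteq V\cap W$. Thus the count reduces to the number of $(k-m)$-dimensional subspaces of the $(n-m)$-dimensional space $\bar\F$ meeting the fixed $(k-m)$-dimensional subspace $\bar W$ trivially.

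This last number I would obtain by a standard ordered-basis count. A linearly independent tuple $(v_1,\dots,v_{k-m})$ in $\bar\F$ spanning a subspace that meets $\bar W$ trivially is built by successively choosing $v_{i+1}\notin \bar W + \Span\{v_1,\dots,v_i\}$, a subspace of dimension $(k-m)+i$; this gives $\prod_{i=0}^{k-m-1}\bigl(2^{n-m}-2^{(k-m)+i}\bigr)$ such tuples. Dividing by the number $\prod_{i=0}^{k-m-1}(2^{k-m}-2^{i})$ of ordered bases of a fixed $(k-m)$-dimensional space over $\F_2$ yields the number of subspaces. Pulling the powers of $2$ out of the numerator (using $n-m-(k-m)=n-k$) and out of the denominator, cancelling the common factor $2^{\binom{k-m}{2}}$, leaves $2^{(k-m)^2}$ times $\prod_{i=0}^{k-m-1}(2^{n-k-i}-1)\big/\prod_{j=1}^{k-m}(2^{j}-1)$, which is precisely $\binom{n-k}{k-m}_2$. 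Combining with the $\binom{k}{m}_2$ choices of $U$ completes the proof.

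The arithmetic simplification is routine; the only step requiring care is the quotient argument, namely checking that $V\mapsto V/U$ is the correct bijection and that it converts $V\cap W = U$ into a trivial-intersection condition in $\bar\F$. I do not anticipate a real obstacle here, as the whole argument is finite-geometry bookkeeping of the type found in \cite{Subspace:braun2018q}.
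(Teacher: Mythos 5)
Your proof is correct and follows the same decomposition as the paper: fix the intersection $U = V \cap W$, count the $\binom{k}{m}_2$ choices of $U$, and show that each $U$ admits $2^{(k-m)^2}\binom{n-k}{k-m}_2$ subspaces $V$ with $V \cap W = U$. The only difference is that the paper obtains this last count by citing Lemma~1 of \cite{Subspace:braun2018q}, whereas you prove it directly via the quotient correspondence and an ordered-basis count; your argument (including the bijection $V \mapsto V/U$ and the power-of-two bookkeeping) checks out.
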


\begin{proof}
    A special case of Lemma 1 from \cite{Subspace:braun2018q} states that for any subspace $U \subseteq W$ of dimension $m$, there are $2^{(k-m)^2}\binom{n-k}{k-m}_2$ subspaces $V \in \Gr_2(n,k)$ such that $V\cap W = U$. There are $\binom{k}{m}_2$ subspaces $U \subseteq W$ with dimension $m$, which completes the proof.
\end{proof}

We say that a mapping $\pi : \Gr_2(n,k) \to \Gr_2(n,k)$ has the $m$-intersection property if $\dim(W \cap \pi(W)) = m$ for every subspace $W \in \Gr_2(n,k)$. Lemma \ref{lem:subspacePermutations} asserts the existence of a collection of mutually orthogonal permutations with the $m$-intersection property, and follows from Lemma \ref{lem:subspaces}:

\begin{lemma}\label{lem:subspacePermutations}
    For $0 \leq m \leq k$, there exists a collection of $2^{(k-m)^2}\binom{n-k}{k-m}_2\binom{k}{m}_2$ mutually orthogonal permutations $\{\pi_i\}_i$ on $\Gr_2(n,k)$ with the $m$-intersection property.
\end{lemma}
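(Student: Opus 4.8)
The plan is to recast the desired family of permutations as a decomposition of a regular bipartite graph into perfect matchings. Form the bipartite graph $H$ whose two vertex classes $A$ and $B$ are each a copy of $\Gr_2(n,k)$, with an edge between $W \in A$ and $V \in B$ precisely when $\dim(W \cap V) = m$. A collection of permutations $\{\pi_i\}$ of $\Gr_2(n,k)$ that are pairwise orthogonal and each have the $m$-intersection property is exactly the same data as a collection of pairwise edge-disjoint perfect matchings of $H$: the matching attached to $\pi_i$ joins each $W \in A$ to $\pi_i(W) \in B$; the $m$-intersection property is the statement that this is an edge of $H$; and orthogonality, i.e.\ $\pi_i(W) \neq \pi_j(W)$ for $i \neq j$ and every $W$, is the statement that no two of the matchings share an edge.

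First I would verify that $H$ is $N$-regular, where $N = 2^{(k-m)^2}\binom{n-k}{k-m}_2\binom{k}{m}_2$. By Lemma \ref{lem:subspaces}, every $W \in A$ has exactly $N$ neighbors in $B$. Since the condition $\dim(W \cap V) = m$ is symmetric in $W$ and $V$, applying Lemma \ref{lem:subspaces} with the roles of the two subspaces interchanged shows that every $V \in B$ likewise has exactly $N$ neighbors in $A$. (When $N = 0$, which occurs precisely when $k - m > n - k$ so that $\binom{n-k}{k-m}_2 = 0$, the lemma asserts only the existence of the empty collection and is vacuously true; so we may assume $N \geq 1$.) Because $\Gr_2(n,k)$ is finite, $H$ is a finite $N$-regular bipartite graph.

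Next I would invoke the standard fact that a finite $N$-regular bipartite graph decomposes into $N$ perfect matchings. One way to see this is to peel off matchings one at a time: any $d$-regular bipartite graph satisfies Hall's condition, since for a set $S$ in one class the $d|S|$ edges leaving $S$ all land in $N(S)$, where each vertex absorbs at most $d$ of them, forcing $|N(S)| \geq |S|$; hence it has a perfect matching, and deleting that matching leaves a $(d-1)$-regular bipartite graph, so iterating $N$ times exhausts every edge. Equivalently, this is König's edge-coloring theorem. Translating back through the correspondence of the first paragraph produces $N$ pairwise orthogonal permutations of $\Gr_2(n,k)$, each with the $m$-intersection property, as required; these may then be fed into Lemma \ref{lem:NormOfSum}.

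The only point requiring care is the bookkeeping in the second paragraph: establishing that $H$ is regular on \emph{both} sides, which rests on the symmetry of the intersection-dimension condition combined with Lemma \ref{lem:subspaces}, and noting that finiteness of $\Gr_2(n,k)$ is exactly what licenses the matching-decomposition step. Beyond this, I do not anticipate any substantial obstacle; the content of the lemma lies in the enumeration of Lemma \ref{lem:subspaces} and in the combinatorial reformulation, not in the graph-theoretic step.
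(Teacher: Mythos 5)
Your proof is correct, but it takes a genuinely different route from the paper's. The paper works directly with the (non-bipartite) graph $G_m$ on vertex set $\Gr_2(n,k)$ with edges given by $\dim(V\cap W)=m$, observes via Lemma~\ref{lem:subspaces} that it is $f(n,k,m)$-regular with $f(n,k,m)$ \emph{even} (because of the factor $2^{(k-m)^2}$ when $m<k$), applies Petersen's 2-factor theorem to split the edge set into $\tfrac{1}{2}f(n,k,m)$ two-factors, and extracts two mutually orthogonal permutations from each two-factor by orienting the cycles in both directions (the cycle-length-$\geq 3$ condition guaranteeing $\pi_1 \neq \pi_1^{-1}$ pointwise). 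You instead pass to the bipartite double cover $H$ and invoke K\"onig's edge-colouring theorem (equivalently, iterated Hall) to decompose the $N$-regular bipartite graph into $N$ perfect matchings, each of which is literally a permutation with the $m$-intersection property, with edge-disjointness translating directly into mutual orthogonality. Your version buys some simplicity: it needs no parity observation about $f(n,k,m)$, no special-casing of $m=k$, and no cycle-orientation bookkeeping, since a matching in $H$ may use the edge $(W_A,V_B)$ without using $(V_A,W_B)$, so the asymmetry that the paper handles via orientations is absorbed into the bipartite model for free. The paper's version is marginally more self-contained in that it stays on the original vertex set, and it incidentally produces permutations in inverse pairs, but nothing in the application requires that. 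Both yield exactly the count $2^{(k-m)^2}\binom{n-k}{k-m}_2\binom{k}{m}_2$, and your handling of the degenerate case $N=0$ is a detail the paper leaves implicit.
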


\begin{proof}
    Let $f(n,k,m) = 2^{(k-m)^2}\binom{n-k}{k-m}_2\binom{k}{m}_2$. If $m=k$, the identity permutation is sufficient. Otherwise, let $G_m$ denote the graph with vertex set $V_{G_m} = \Gr_2(n,k)$, and an edge between two vertices $V,W$ iff $\dim(V\cap W)=m$. By lemma \ref{lem:subspaces}, $G_m$ is a $f(n,k,m)-$regular graph, where $f(n,k,m)$ is even. By Petersen's 2-factor theorem (see \cite{2FT:lovasz2009matching}), the edges of $G_m$ can be partitioned into $\frac{1}{2}f(n,k,m)$ edge-disjoint 2-factors.

    Each 2-factor is a collection $\mathcal{C}$ of cycles of length $\geq 3$ such that every vertex belongs to exactly one cycle. We define 2 permutations for each 2-factor: for each cycle $C$ of a 2-factor, fix an orientation of $C$ and define $\pi_1(W) = V$ for each $W \in C$ such that $(W,V)$ is a directed edge in $C$. Since each vertex belongs to exactly one cycle, then $\pi_1$ is a well-defined permutation on $\Gr_2(n,k)$. We define $\pi_2 = \pi_1^{-1}$, and note that since each cycle has length $\geq 3$, then $\pi_1(W) \neq \pi_2(W)$ for every $W\in \Gr_2(n,k)$, hence $\pi_1$ and $\pi_2$ are mutually orthogonal.

    Since each 2-factor is edge-disjoint, two permutations from different 2-factors must necessarily be mutually orthogonal, as for any permutation $\pi$ defined from a 2-factor, $\pi(W)=V$ coincides with an edge between $V$ and $W$. Hence, we have constructed $f(n,k,m)$ mutually orthogonal permutations $\{\pi_i\}_i$ on $\Gr_2(n,k)$ with the $m$-intersection property.
\end{proof}
Additionally, a permutation $\pi_1$ on $\Gr_2(n,k)$ with the $m_1$-intersection property is mutually orthogonal to a permutation $\pi_2$ on $\Gr_2(n,k)$ with the $m_2$-intersection property when $m_1 \neq m_2$. Thus, a corollary of Lemma \ref{lem:subspacePermutations} states that there is a collection of $\binom{n}{k}_2 = \sum_{m=0}^k 2^{(k-m)^2}\binom{n-k}{k-m}_2\binom{k}{m}_2$ mutually orthogonal permutations on $\Gr_2(n,k)$, such that $2^{(k-m)^2}\binom{n-k}{k-m}_2\binom{k}{m}_2$ have the $m$-intersection property for $0 \leq m \leq k$.


\section{Game Definition}\label{sec:def}
\begin{figure}[t]
    \centering
    \includegraphics[width=\linewidth]{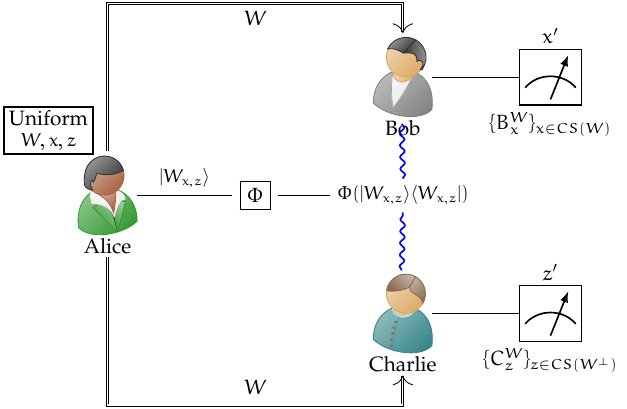}
    \caption{A representation of the game dynamics for the $(n,k)$ Coset Monogamy Game. Alice chooses parameters $W \in \Gr_2(n,k)$, $x,z \in \F_2^n$ uniformly at random and prepares the coset state $\ket{W_{x,z}}$. She sends the coset state over the quantum channel $\Phi$ to Bob and Charlie before publicly broadcasting the subspace choice $W \in \Gr_2(n,k)$. Bob and Charlie measure their subsystems with POVMs to determine guesses for $x$ and $z$ up to cosets of $W$ and $W^\perp$, respectively.}
    \label{fig:game}
\end{figure}

\begin{algorithm}[t]
\caption{\textbf{Protocol 1:} $(n,k)$ Coset Monogamy Game}
\begin{algorithmic}
\State \textbf{Paramaters:}
\begin{itemize}
    \item Finite-dimensional Hilbert spaces $\mcH_B$, $\mcH_C$.
    \item A quantum channel $\Phi: (\C^2)^{\otimes n} \to \mcH_B \otimes \mcH_C$.
    \item A collection $\mathcal{B}$ of of POVMs $\{B_x^W\}_{x \in \CS(W)}$ on $\mcH_B$ for every subspace $W \in \Gr_2(n,k)$.
    \item A collection $\mathcal{C}$ of of POVMs $\{C_z^W\}_{z \in \CS(W^\perp)}$ on $\mcH_C$ for every subspace $W \in \Gr_2(n,k)$.
\end{itemize}
\State \textbf{Game Procedure:}
\begin{itemize}
    \item Alice samples $W \in \Gr_2(n,k)$ and strings $x,z \in \F_2^n$ uniformly at random, and prepares the state $\ket{W_{x,z}}$.
    \item Alice sends the state to Bob and Charlie using the channel $\Phi$. Together, Bob and Charlie receive the state $\Phi(\ket{W_{x,z}}\bra{W_{x,z}}) \in \mcH_B \otimes \mcH_C$.
    \item Alice announces $W \in \Gr_2(n,k)$ to Bob and Charlie.
    \item Bob and Charlie measure their (possibly entangled) states using the POVMs $\{B_x^W\}_{x \in \CS(W)}$ and $\{C_z^W\}_{z \in \CS(W^{\perp})}$ respectively to receive measurement outcomes $x'$ and $z'$.
\end{itemize}
\State \textbf{Win Condition:} $x-x' \in W$ and $z-z' \in W^{\perp}$ 
\end{algorithmic}
\label{alg:game}
\end{algorithm}

The $(n,k)$ Coset Monogamy Game procedure is formally outlined in Protocol~\ref{alg:game}, and illustrated in Fig.~\ref{fig:game}. The game requires Bob and Charlie to extract complementary information $(x,z)$ about a randomly chosen coset state $\ket{W_{x,z}}$ prepared by Alice, conditioned on knowledge of $W \in \Gr_2(n,k)$. Alice first sends the coset state over a channel $\Phi$ to Bob and Charlie, where $\Phi$ can be any quantum channel Bob and Charlie choose. Subsequently, Alice announces $W$, after which Bob and Charlie must devise a measurement to extract information about $x$ and $z$, respectively, without communicating. If Bob and Charlie determine $x$ and $z$ up to a coset of $W$ and $W^\perp$ respectively, then they win the $(n,k)$ Coset Monogamy Game.

No restriction is imposed on the sizes of the Hilbert spaces of Bob and Charlie (i.e., the number of qubits available to Bob and Charlie), so they may use ancillary qubits if desired. The information $(x,z)$ that Bob and Charlie have to extract will be distributed across both Hilbert spaces $\mcH_B$ and $\mcH_C$ randomly based on the choice of $W \in \Gr_2(n,k)$. The fact that Bob and Charlie cannot always simultaneously win the $(n,k)$ Coset Monogamy Game without communicating relates to two fundamental principles of quantum information science: the no-cloning theorem and the monogamy-of-entanglement. The no-cloning theorem \cite{NoCloning:wootters1982single, NoCloning:dieks1982communication} negates the existence of a cloning map $\Phi_{cl}(\ket{\psi}\bra{\psi}) = \ket{\psi}\bra{\psi} \otimes \ket{\psi}\bra{\psi} \in \mcH_B \otimes \mcH_C$ that would allow both Bob and Charlie to discriminate any set of orthogonal states perfectly and always win. For monogamy-of-entanglement, the Choi-Jamio\l{}kowski isomorphism \cite{CJIso:choi1975completely,CJIso:jamiolkowski1972linear} relates the quantum channel $\Phi$ to a tripartite state $\rho_{ABC}$. The stronger the entanglement of the state $\rho_{AB}$ $(\rho_{AC})$ between Alice and Bob (Charlie), the better Bob (Charlie) will be able to discriminate states of a random basis in the $(n,k)$ Coset Monogamy Game. Informally, the principle of monogamy-of-entanglement describes a trade-off between the amount of entanglement in $\rho_{AB}$ versus $\rho_{AC}$ of a tripartite state $\rho_{ABC}$. For a more formal discussion of the principle of monogamy-of-entanglement, refer to \cite{MOE:coffman2000distributed, MOE:osborne2006general}.

From Protocol 1, the winning probability $p_{win}$ of the Coset Monogamy Game can be expressed as
\begin{equation}\label{eq:2PGame}
    p_{win} = \underset{\substack{W \in \Gr_2(n,k) \\ x \in \CS(W) \\ z \in \CS(W^\perp)}}{\E} \Tr[(B_x^W \otimes C_z^W)\Phi(\ket{W_{x,z}}\bra{W_{x,z}})]
\end{equation}
where $\E$ denotes a uniform expectation. We denote the optimal winning probability for the $(n,k)$ Coset Monogamy Game by $p_{k,n} = \sup p_{win}$, where the supremum is taken over all the parameters in Protocol 1. For a given subspace rate $R \in [0,1]$, the optimal winning rate $\omega(R)$ is given by $\limsup_n \exp(\frac{1}{n}\log(p_{\left \lfloor nR \right \rfloor, n}))$.


\section{An Upper Bound on the Winning Rate}\label{sec:thm}

In \cite{Game:culf2022monogamy}, the authors prove that $\omega(\frac{1}{2}) \leq \cos(\frac{\pi}{8})$. In this paper, we strengthen this result and provide a generalization for arbitrary subspace rates $R \in [0,1]$. This proof will proceed very similarly to the proof of Theorem 2.1 in \cite{Game:culf2022monogamy}. Still, we strengthen the upper bound from \cite{Game:culf2022monogamy} by applying Lemma \ref{lem:NormOfSum} directly on the sum of $\binom{n}{k}_2$ projections seen in \eqref{eq:SumOfPi}.

\begin{theorem}\label{thm:mainTheorem}
    $\omega(R) \leq 2^{-\frac{1}{2}R^*}$, where $R^* = \min \{R,1-R\}$
\end{theorem}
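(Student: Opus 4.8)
The plan is to bound $p_{k,n}$ directly and then take the appropriate limit. Fix $\Phi$, the POVMs $\{B_x^W\}$, $\{C_z^W\}$, and write $\rho = \Phi(\ket{W_{x,z}}\bra{W_{x,z}})$ inside the expectation in \eqref{eq:2PGame}. Following the strategy of Theorem 2.1 in \cite{Game:culf2022monogamy}, I would first reduce to projective measurements (by Naimark dilation, absorbing the ancilla into $\mcH_B,\mcH_C$, which is allowed since no dimension bound is imposed) and then rewrite $p_{win}$ as the expectation over $W$ of a trace against the operator $M_W = \sum_{x\in\CS(W)}\sum_{z\in\CS(W^\perp)} B_x^W\otimes C_z^W$ acting on the received state $\Phi(\ket{W}\bra{W})$ after undoing the Pauli corrections; the uniform average over $x,z$ turns the game operator into an average that can be bounded by $\frac{1}{|\Gr_2(n,k)|}\big\|\sum_{W} \Pi^B_W \otimes \Pi^C_W\big\|$ for suitable projections, exactly as in the $k=n/2$ case. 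Here I would use that $B_x^W\otimes C_z^W$ summed over $z$ gives, on Charlie's side, the projection onto $\Pi_{x'+W}$-type spaces, so that the relevant norm is that of a sum of $\binom{n}{k}_2$ projections $P_W$, one per subspace $W\in\Gr_2(n,k)$.

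The key step is to apply Lemma \ref{lem:NormOfSum} to this sum of $\binom{n}{k}_2$ projections, using the corollary of Lemma \ref{lem:subspacePermutations}: there is a family of $\binom{n}{k}_2$ mutually orthogonal permutations of $\Gr_2(n,k)$, of which exactly $f(n,k,m):=2^{(k-m)^2}\binom{n-k}{k-m}_2\binom{k}{m}_2$ have the $m$-intersection property, for $m=0,\dots,k$. By Lemma \ref{lem:NormOfSum} (with square roots dropped since the $P_W$ are projections),
\begin{equation*}
\Bigl\|\sum_{W} P_W\Bigr\| \;\leq\; \sum_{m=0}^{k} f(n,k,m)\,\max_{\dim(V\cap W)=m}\bigl\|P_V P_W\bigr\|.
\end{equation*}
Lemma \ref{lem:prodOfSumOfCosetStates} bounds each inner maximum by $\sqrt{2^{m-k}} = 2^{-(k-m)/2}$. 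Dividing by $\binom{n}{k}_2 = \sum_m f(n,k,m)$, we get $p_{k,n} \leq \sum_{m=0}^{k} \frac{f(n,k,m)}{\binom{n}{k}_2} 2^{-(k-m)/2}$, i.e. $p_{k,n}$ is a convex combination (weighted by the $f(n,k,m)$) of the values $2^{-(k-m)/2}$.

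The last step is asymptotic: set $k=\lfloor nR\rfloor$ and estimate $\frac{1}{n}\log p_{k,n}$ as $n\to\infty$. Writing $j = k-m$, the weight $f(n,k,m)$ behaves like $2^{j^2}\binom{n-k}{j}_2\binom{k}{j}_2$, and $\binom{a}{j}_2 \approx 2^{j(a-j)}$ to leading exponential order; so $\log_2 f(n,k,m) \approx j^2 + j(n-k-j) + j(k-j) = j(n-j) - j^2 = j(n-2j)$, wait — more carefully $j^2 + j(n-k-j)+j(k-j) = j^2 + jn - jk - j^2 + jk - j^2 = jn - j^2$. Meanwhile $\log_2\binom{n}{k}_2 \approx k(n-k)$. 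So the exponent of the $m$-term, divided by $n$, is approximately $\frac{1}{n}\bigl(jn - j^2 - k(n-k)\bigr) - \frac{j}{2}$. Maximizing over $j\in[0,k]$: the coefficient of $j$ in $jn - j^2 - \tfrac{nj}{2} = \tfrac{nj}{2} - j^2$ shows the unconstrained optimum is at $j = n/4$, but $j\leq k = nR$; I would check that for $R\leq 1/2$ the constraint $j\le nR$ binds (or not) and that the dominant term gives exponent $\to -\tfrac12 R^* n$ after the $-k(n-k)$ normalization cancels against the growth of $f$, so that $\omega(R)\le 2^{-R^*/2}$ with $R^*=\min\{R,1-R\}$; the $\min$ with $1-R$ arises by the $R\leftrightarrow 1-R$ symmetry of the problem (Bob/Charlie and $W\leftrightarrow W^\perp$ via $H^{\otimes n}$). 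I expect this asymptotic optimization — correctly identifying which $m$ dominates the convex combination and verifying the $j$-constraint interacts with the Gaussian-binomial exponents to produce exactly $R^* = \min\{R,1-R\}$ — to be the main obstacle; the earlier algebraic reductions are routine given the lemmas.
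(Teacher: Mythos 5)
Your architecture is the same as the paper's: reduce via the Choi--Jamio\l{}kowski isomorphism and Naimark dilation to bounding $\frac{1}{N}\|\sum_W \Pi^W\|$ for one projection per subspace, apply Lemma~\ref{lem:NormOfSum} with the $\binom{n}{k}_2$ mutually orthogonal permutations furnished by Lemma~\ref{lem:subspacePermutations} (of which $f(n,k,m)$ have the $m$-intersection property), bound each pairwise product by $\sqrt{2^{m-k}}$ via Lemma~\ref{lem:prodOfSumOfCosetStates} / the $P^V Q^W$ argument, and then estimate $\sum_m \frac{f(n,k,m)}{N}2^{-(k-m)/2}$ asymptotically. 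Up to and including the convex-combination bound on $p_{k,n}$, your proposal matches the paper.

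The gap is in the final optimization, which you yourself flag as the main obstacle and do not complete — and the partial computation you do write down contains an error that would derail it. The exponent of the $j$-th term is $h(j) \approx jn - j^2 - k(n-k) - \tfrac{j}{2}$ (the decay factor contributes $-\tfrac{j}{2}$, not $-\tfrac{nj}{2}$ as in your line ``$jn - j^2 - \tfrac{nj}{2}$''); with the correct term, $h'(j) = n - 2j - \tfrac12$, so the unconstrained maximizer is at $j \approx n/2$, not $n/4$, and hence the constraint $j \le \min\{k,n-k\}$ \emph{always} binds for every $R$, with no case analysis on whether $R \gtrless 1/4$. Plugging in $j^* = k$ (for $k \le n/2$) gives $h(k) = -k/2$ exactly, i.e.\ the normalization $-k(n-k)$ cancels and the rate is $2^{-R/2}$; the case $k > n/2$ follows since terms with $j > n-k$ vanish, giving $g(n,k)=g(n,n-k)$. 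The paper avoids leading-exponential-order heuristics altogether: it shows the ratio $f(n,k,k-m)/f(n,k,k-m-1) \le \tfrac{2}{9}$, so the weights decay geometrically toward small $j$ and the sum is dominated by a convergent geometric series times $\sqrt{2^{-k}}$, yielding the clean non-asymptotic bound $g(n,k) = O(2^{-k/2})$. Your sketch can be repaired to reach the same conclusion (and even a crude ``$(k+1)\times$ max term'' bound suffices at the level of rates), but as written the optimization is both miscalculated and unverified, which is precisely the step that produces the claimed exponent $R^* = \min\{R, 1-R\}$.
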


\begin{proof}
Fix $0 \leq k \leq n$ and a strategy of parameters for Protocol 1. As in \cite[Lemma 3.1]{Game:culf2022monogamy}, we use the Choi-Jamio\l{}kowski isomorphism to transform \eqref{eq:2PGame} into the expression of an extended non-local game \cite{NLG:johnston2016extended}:

\begin{equation}\label{eq:NLG}
    p_{win} = \underset{W}{\E} \sum_{x,z} \Tr[(\ket{W_{x,z}}\bra{W_{x,z}} \otimes B_x^W \otimes C_z^W)\rho]
\end{equation}

where the sum is over $x \in \CS(W)$, $z \in \CS(W^\perp)$, the expectation is over $W \in \Gr_2(n,k)$, and $\rho = (\Id_A \otimes \Phi)(\ket{\phi}\bra{\phi})$ for any maximally entangled state $\ket{\phi} \in \mcH_A \otimes \mcH_A$. Furthermore, we use a standard purification argument and Naimark's dilation theorem as in \cite{Game:culf2022monogamy,Game:tomamichel2013monogamy} to assume without loss of generality that $\rho$ is a pure state $\ket{\psi}\bra{\psi}$, and all POVMs $\{B_x^W\}_x$, $\{C_z^W\}_z$ are PVMs.\\

Denoting $\Pi^W = \sum_{x,z}\ket{W_{x,z}}\bra{W_{x,z}} \otimes B_x^W \otimes C_z^W$, we upper bound the winning probability as 
\begin{equation}\label{eq:sumNorm}
    p_{win} \leq \frac{1}{N} \Bigl \| \sum_{W \in \Gr_2(n,k)} \Pi^W \Bigr \|
\end{equation}
where $N = \binom{n}{k}_2$. We briefly remark here for completeness that equality in  \eqref{eq:sumNorm} is achievable, even if the principal eigenvectors $\ket{\psi}$ of $\sum_W \Pi^W$ cannot be obtained from the 
Choi-Jamio\l{}kowski isomorphism. The argument involves enlarging Bob and Charlie's Hilbert spaces, and purifying a principal eigenvector $\ket{\psi}$ of $\sum_W \Pi^W$ to a state $\ket{\psi'}$ that is maximally entangled between Alice's Hilbert space and the joint Hilbert space of Bob and Charlie. The state $\ket{\psi'}$ can then be obtained from the Choi-Jamio\l{}kowski isomorphism by an appropriate quantum channel $\Phi$.

We next derive an upper bound $\bigl \| \Pi^V \Pi^{W} \bigr \|$. Defining projections $
P^V = \sum_{x,z} \ket{V_{x,z}}\bra{V_{x,z}} \otimes \Id_B \otimes C_z^V$ and 
$Q^W = \sum_{x,z} \ket{W_{x,z}}\bra{W_{x,z}} \otimes B_x^W \otimes \Id_C
$, we have
\begin{equation}
\Bigl \| \Pi^V \Pi^{W} \Bigr \| \leq \Bigl \| P^V Q^W \Bigr \|
\label{eq:PQbound}
\end{equation}
which holds since $0 \leq \Pi^V \leq P^V$ and $0 \leq \Pi^W \leq Q^W$ (see, for example, \cite[Lemma 1]{Game:tomamichel2013monogamy}). An explicit calculation of $\Bigl \| P^V Q^W \Bigr \|$ gives the same upper bound as in Lemma \ref{lem:prodOfSumOfCosetStates}:
\begin{align*}
    \Bigl \| P^V &Q^W \Bigr \|\\
    &= \Bigl \| \sum_{\substack{x,z\\x',z'}} \ket{V_{x,z}}\bra{V_{x,z}}\ket{W_{x',z'}}\bra{W_{x',z'}} \otimes B_{x'}^W \otimes C_z^V \Bigr \|\\
    &= \max_{x',z} \Bigl \| \sum_{x,z'} \ket{V_{x,z}}\bra{V_{x,z}}\ket{W_{x',z'}}\bra{W_{x',z'}} \Bigr \|\\
    &\leq \sqrt{2^{\dim(V \cap W) - k}}
\end{align*}
which gives $ \Bigl \| \Pi^V \Pi^W \Bigr \| \leq \sqrt{2^{\dim(V \cap W) - k}}$. The second equality occurs since $P^VQ^W$ is the sum over $z \in \CS(V),x' \in \CS(W)$ of operators with orthogonal domain and range.

From Lemma \ref{lem:subspacePermutations}, we obtain $N$ mutually orthogonal permutations on $\Gr_2(n,k)$ such that $2^{(k-m)^2}\binom{n-k}{k-m}_2\binom{k}{m}_2$ have the $m$-intersection property for $0 \leq m \leq k$. Applying Lemma \ref{lem:NormOfSum} with these permutations upper bounds the norm of $\sum_W \Pi^W$ in terms of the norms of products of projectors $\Pi^W$:
\begin{equation}\label{eq:SumOfPi}
    \Bigl \| \sum_{W \in \Gr_2(n,k)} \Pi^W \Bigr \| \leq \sum_{i=1}^N \max_{W \in \Gr_2(n,k)} \Bigl \| \Pi^W \Pi^{\pi_i(W)} \Bigr \|
\end{equation}
(Each $\Pi^W$ is a projection, and we omitted the square roots.)
\\

Now, recall that 1) $\Bigl \| \Pi^W \Pi^{\pi_i(W)} \Bigr \| \leq \sqrt{2^{\dim(W \cap \pi_i(W))}}$, and 2) $2^{(k-m)^2}\binom{n-k}{k-m}_2\binom{k}{m}_2$ of the chosen permutations have the $m$-intersection property. We can then upper bound the right-hand side of \eqref{eq:SumOfPi} as $\sum_{m=0}^k 2^{(k-m)^2}\binom{n-k}{k-m}_2\binom{k}{m}_2 \sqrt{2^{m-k}}$, yielding the following upper bound for the probability of winning the $(n,k)$ Coset Monogamy Game:
\begin{align*}
      p_{win} & \leq \frac{1}{N}\sum_{m=0}^k 2^{(k-m)^2}\binom{n-k}{k-m}_2\binom{k}{m}_2 \sqrt{2^{m-k}}\\
      &= \frac{1}{N}\sum_{m=0}^k 2^{m^2}\binom{n-k}{m}_2\binom{k}{m}_2 \sqrt{2^{-m}}\\
      & = O(\sqrt{2^{-\min\{k,n-k\}}})
\end{align*}
where we used the reflexive relation $\binom{k}{k-m}_2=\binom{k}{m}_2$ for Gaussian binomial coefficients. To prove that the summation $g(n,k)=\frac{1}{N}\sum_{m=0}^k 2^{m^2}\binom{n-k}{m}_2\binom{k}{m}_2 \sqrt{2^{-m}}$ satisfies $O(\sqrt{2^{-\min\{k,n-k\}}})$, we first observe that when $k \geq \frac{n}{2}$, the contribution of the terms for $m > n-k$ is zero. Hence, we have the equivalence $g(n,k)=g(n,n-k)$, and we may replace $k$ with $n-k$ to assume without loss of generality that $k \leq \frac{n}{2}$.

Recalling $f(n,k,m)$ as in Lemma \ref{lem:subspacePermutations}, we observe the following for $0 \leq m \leq k-2$:

\begin{align*}
    \frac{f(n,k,k-m)}{f(n,k,k-m-1)} = \frac{2^{-(2m+1)}(2^{m+1}-1)^2}{(2^{n-k-m}-1)(2^{k-m}-1)} \leq \frac{2}{9}
\end{align*}

Thus, $f(n,k,k-m)$ is an increasing function of $m$ for $0\leq m \leq k-1$. An inductive argument yields $f(n,k,k-m)\leq \left( \frac{2}{9} \right)^{k-1-m}f(n,k,1)$ for $0 \leq m \leq k-1$, which allows us to upper bound $g(n,k)$ as follows:

\begin{align*}
   & g(n,k) = \frac{1}{N}\sum_{m=0}^k 2^{m^2}\binom{n-k}{m}_2\binom{k}{m}_2\sqrt{2^{-m}}\\
    &= \frac{1}{N} \sum_{m=0}^{k} f(n,k,k-m)\sqrt{2^{-m}}\\
    &\leq \frac{f(n,k,1)}{N} \left(\frac{2}{9}\right)^{k-1}\sum_{m=0}^{k-1} \left(\frac{9}{2\sqrt{2}}\right)^m + \frac{f(n,k,0)}{N}\sqrt{2^{-k}}\\
    &\leq \left(\frac{2}{9}\right)^{k-1}\left(\frac{\left(\frac{9}{2\sqrt{2}}\right)^k - 1}{\frac{9}{2\sqrt{2}}-1}\right) + \sqrt{2^{-k}}\\
    & \leq \left(\frac{9}{2(\frac{9}{2\sqrt{2}}-1)} + 1\right) \sqrt{2^{-k}} = O(2^{-\frac{1}{2}k})
\end{align*}

Therefore, for $R \in [0,\frac{1}{2}]$, we have $p_{\lfloor nR \rfloor,n}=O(2^{-\frac{1}{2}\lfloor nR \rfloor})$, giving $\omega(R) \leq 2^{-\frac{1}{2}R}$. For $R \in (\frac{1}{2},0]$, we have $p_{\lfloor nR \rfloor,n}=O(2^{-\frac{1}{2}(n - \lfloor nR \rfloor )})$ for sufficiently large $n$, giving $\omega(R) \leq 2^{-\frac{1}{2}(1-R)}$, proving Theorem \ref{thm:mainTheorem}.
\end{proof}
The reflexive nature of Theorem \ref{thm:mainTheorem} can be explained by the dual relation of coset states. Namely, given a strategy $(\mcH_B,\mcH_C,\mathcal{B},\mathcal{C},\Phi)$ for a $(n,k)$ Coset Monogamy Game, there exists a strategy $(\mcH_{B'},\mcH_{C'},\mathcal{B}',\mathcal{C}',\Phi')$ for a $(n,n-k)$ Coset Monogamy Game that achieves the same winning probability by setting $\mcH_{B'} = \mcH_C$, $\mcH_{C'} = \mcH_B$, $B_z^{W^\perp} = C_z^W$ for $z \in \CS(W^\perp)$, $C_x^{W^\perp} = B_x^W$ for $x \in \CS(W)$, and $\Phi'(X) = \SWAP \left( \Phi(H^{\otimes n} X H^{\otimes n}) \right)\SWAP^\dagger$. Here, $\SWAP: \mcH_B \otimes \mcH_C \to \mcH_C \otimes \mcH_B$ denotes the unitary map sending $\ket{\psi} \otimes \ket{\phi}$ to $\ket{\phi} \otimes \ket{\psi}$.


\section{Unentangled Extended Game Strategies}\label{sec:other}
Extended non-local games are a generalization of non-local games, in which a multipartite state $\rho$ is shared among a referee and all players, and the referee (in our case, Alice) measures according to some hermitian observable based upon the posed questions and received answers \cite{NLG:johnston2016extended}. These games provide a natural setting for formulating multipartite steering inequalities \cite{Steering:uola2020quantum} and monogamy-of-entanglement inequalities \cite{Game:tomamichel2013monogamy}. 

In the extended non-local formulation of the Coset Monogamy Game, an unentangled strategy involves restricting the choice of the tripartite state $\rho$ in \eqref{eq:NLG} to the set of separable states $\rho = \sum_{i=1}^N p_i \rho_A^i \otimes \rho_{BC}^i$. The winning probability of the unentangled strategies is in an upper bound of the deterministic strategies, where Bob and Charlie choose $x$ and $z$, respectively, as a deterministic function of $W$ \cite{NLG:johnston2016extended}. Thus, the optimal winning probability $p_{opt}^{u}(n,k)$ for unentangled strategies is given by
\begin{equation}
    p_{opt}^{u}(n,k) = \max_{f,g} \Bigl \| \underset{W}{\E} \ket{W_{f(W),g(W)}} \bra{W_{f(W),g(W)}} \Bigr \|
\end{equation}
where $f,g:\Gr_2(n,k) \to \F_2^n$, and the expectation is uniform over $W \in \Gr_2(n,k)$. For unentangled strategies, we can explicitly determine $p_{opt}^u(n,k)$.

\begin{theorem}\label{thm:unentangled}
    $p_{opt}^{u}(n,k) = \frac{1}{N}\sum_{m=0}^k 2^{m^2} \binom{n-k}{m}_2\binom{k}{m}_22^{-m} = O(2^{-\min\{k,n-k\}})$
\end{theorem}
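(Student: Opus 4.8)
The plan is to prove two matching bounds on $p_{opt}^u(n,k)$, both equal to $h(n,k) := \frac{1}{N}\sum_{m=0}^k 2^{m^2}\binom{n-k}{m}_2\binom{k}{m}_2 2^{-m}$ (where $N=\binom{n}{k}_2$), and then to get the asymptotic estimate by recycling the combinatorial ratio bound already used in the proof of Theorem \ref{thm:mainTheorem}. The first step is to note that for any fixed $f,g:\Gr_2(n,k)\to\F_2^n$, the operator $A_{f,g}:=\frac{1}{N}\sum_{W}\ket{W_{f(W),g(W)}}\bra{W_{f(W),g(W)}}$ is of the form $\frac1N MM^\dagger$ where the columns of $M$ are the (unit) coset states, so it has the same nonzero spectrum as $\frac1N M^\dagger M = \frac1N G_{f,g}$, with $G_{f,g}$ the Gram matrix whose $(V,W)$ entry is the overlap $\bra{V_{f(V),g(V)}}\ket{W_{f(W),g(W)}}$. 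Hence $\|A_{f,g}\|=\frac1N\|G_{f,g}\|$, and it suffices to analyze the Hermitian matrix $G_{f,g}$.

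For the upper bound I would use that the spectral norm of a Hermitian matrix is bounded by its largest absolute row sum (Gershgorin's theorem, or $\|\cdot\|_2\le\|\cdot\|_\infty$ for Hermitian matrices). By \eqref{eq:innerProd} we have $|(G_{f,g})_{V,W}|\le 2^{\dim(V\cap W)-k}$ \emph{independently of} $f$ and $g$, and by Lemma \ref{lem:subspaces} the number of $W\in\Gr_2(n,k)$ with $\dim(V\cap W)=m$ equals $2^{(k-m)^2}\binom{n-k}{k-m}_2\binom{k}{m}_2$, a number not depending on $V$. Summing over $m$ and re-indexing $m\mapsto k-m$ (using $\binom{k}{k-m}_2=\binom{k}{m}_2$) gives $\max_V\sum_W|(G_{f,g})_{V,W}|\le N\,h(n,k)$, hence $p_{opt}^u(n,k)=\max_{f,g}\|A_{f,g}\|\le h(n,k)$.

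For the lower bound, which also pins the value down exactly, I would take the trivial choice $f\equiv g\equiv 0$, so that $\ket{W_{f(W),g(W)}}=\ket{W}$ is the subspace state and $(G_{0,0})_{V,W}=\bra{V}\ket{W}=2^{\dim(V\cap W)-k}\ge 0$. Thus $G_{0,0}$ is symmetric, entrywise nonnegative, and by the same count from Lemma \ref{lem:subspaces} all its row sums equal $N\,h(n,k)$. Evaluating the quadratic form of $G_{0,0}$ on the normalized all-ones vector shows $\|G_{0,0}\|\ge N\,h(n,k)$, while the row-sum bound gives the reverse inequality; hence $\|G_{0,0}\|=N\,h(n,k)$ and $p_{opt}^u(n,k)\ge\|A_{0,0}\|=h(n,k)$. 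Together with the previous paragraph this yields $p_{opt}^u(n,k)=h(n,k)$.

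Finally, for $h(n,k)=O(2^{-\min\{k,n-k\}})$: the terms with $m>n-k$ vanish, so $h(n,k)=h(n,n-k)$ and I may assume $k\le n/2$. Writing $f(n,k,j)$ as in the proof of Lemma \ref{lem:subspacePermutations} we have $h(n,k)=\frac1N\sum_{m=0}^k f(n,k,k-m)2^{-m}$, and the proof of Theorem \ref{thm:mainTheorem} already supplies $f(n,k,k-m)\le(2/9)^{k-1-m}f(n,k,1)$ for $0\le m\le k-1$. Hence the $m\le k-1$ part of the sum is dominated by a geometric series with ratio $9/4$ (the $\sqrt{2^{-m}}$ of Theorem \ref{thm:mainTheorem} being replaced by $2^{-m}$ here), which is $O\!\big(2^{-k}f(n,k,1)\big)$; dividing by $N$ and using $f(n,k,1)\le N$ and $f(n,k,0)\le N$ gives $h(n,k)=O(2^{-k})=O(2^{-\min\{k,n-k\}})$. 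The only points needing care are that the row-sum upper bound must hold \emph{uniformly} in $f,g$ — which is exactly what \eqref{eq:innerProd} guarantees, since it bounds the overlap by $2^{\dim(V\cap W)-k}$ regardless of the displacement vectors — and that this uniform bound is in fact saturated by $f=g=0$; the remaining content is the constant-row-sum property of the subspace-state Gram matrix, which is immediate from Lemma \ref{lem:subspaces}, so I do not anticipate a substantial obstacle.
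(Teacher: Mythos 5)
Your proof is correct, and the value it establishes matches the paper's. The achievability half is essentially identical to the paper's (the all-ones vector in your Gram-matrix picture corresponds, via $M$, to the paper's eigenvector $\sum_W \ket{W}$, and the constant-row-sum property from Lemma \ref{lem:subspaces} is what makes it a principal eigenvector), and the asymptotic estimate recycles the same ratio bound. Where you genuinely diverge is the optimality half: the paper applies Lemma \ref{lem:NormOfSum} with the $\binom{n}{k}_2$ mutually orthogonal permutations of Lemma \ref{lem:subspacePermutations}, whereas you pass to the Gram matrix $G_{f,g}$ (legitimate here because each $\Pi^W_{f,g}$ is rank one, so $\sum_W \Pi^W_{f,g} = MM^\dagger$ has the same norm as $M^\dagger M$) and bound its spectral norm by the maximum absolute row sum, using only the counting Lemma \ref{lem:subspaces} and the overlap formula \eqref{eq:innerProd}. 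This buys a more elementary and self-contained argument: the 2-factorization machinery behind Lemma \ref{lem:subspacePermutations} is not needed at all for Theorem \ref{thm:unentangled}, and the upper and lower bounds become two faces of the same constant-row-sum observation. The trade-off is that your route does not generalize to Theorem \ref{thm:mainTheorem}, where the summands $\Pi^W$ are high-rank projectors and no thin Gram factorization with uniformly bounded entries is available --- Lemma \ref{lem:NormOfSum} together with the orthogonal permutations is precisely the operator-valued surrogate for the row-sum bound in that setting, which is presumably why the paper keeps the two proofs parallel. One cosmetic caution: you reuse the letter $g$ both for Charlie's strategy and (implicitly, via the paper's $g(n,k)$) for the summation being bounded; in a final write-up you should rename one of them, as you already do by calling the sum $h(n,k)$.
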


\begin{proof}
    The value is attained by choosing $f(W)=g(W)=0$ for all $W\in \Gr_2(n,k)$ and the eigenvector $\ket{\psi} \propto \sum_{W \in \Gr_2(n,k)}\ket{W}$. To show this value is optimal, we fix deterministic strategies $f,g$ and use Lemma \ref{lem:NormOfSum} with the same permutations as in Theorem \ref{thm:mainTheorem}. Denoting $\Pi_{f,g}^W = \ket{W_{f(W),g(W)}} \bra{W_{f(W),g(W)}}$ gives
    \begin{align*}
         &\Bigl \| \underset{W \in \Gr_2(n,k)}{\E} \Pi_{f,g}^W \Bigr \| \leq \frac{1}{N} \sum_{i=1}^N \max_{W \in \Gr_2(n,k)} \Bigl \| \Pi_{f,g}^W \Pi_{f,g}^{\pi_i(W)}  \Bigr \|\\
         &= \frac{1}{N} \sum_{i=1}^N \max_{W \in \Gr_2(n,k)} \left|\bra{W_{f(W),g(W)}}\ket{\pi_i(W)_{f(W),g(W)}}\right|\\
         &\leq \frac{1}{N} \sum_{m=0}^k 2^{(k-m)^2}\binom{n-k}{k-m}_2\binom{k}{m}_22^{m-k}\\
         &= \frac{1}{N} \sum_{m=0}^k 2^{m^2}\binom{n-k}{m}_2\binom{k}{m}_22^{-m}
    \end{align*}
    where we used \eqref{eq:innerProd} to determine the absolute value of the inner products. The fact that the summation of the final line is $O(2^{-\min\{k,n-k\}})$ follows from the exact same reasoning for the summation $g(n,k)$ in Theorem \ref{thm:mainTheorem}.
\end{proof}

\section{Conclusion}
In this work, we extended the Coset Monogamy Game of \cite{QMoney:coladangelo2021hidden,Game:culf2022monogamy} to a more general setting, which we refer to as the $(n,k)$ Coset Monogamy Game, in which the players Bob and Charlie must extract complementary information of unequal sizes from a random coset state of dimension $k=nR$. Our primary interest was classifying the optimal winning rate $\omega(R)$ as a function of the underlying subspace rate $R \in [0,1]$ as $n$ grows arbitrarily large. By using operator inequalities in a manner similar to \cite{Game:tomamichel2013monogamy}, we deduce a non-trivial convex upper bound for $\omega(R)$ for every $R \in (0,1)$. In particular, we show that for every $R \in (0,1)$, the optimal winning probability for the $(n,nR)$ Coset Monogamy Game decays exponentially in the parameter $n$. Additionally, for the weaker class of unentangled strategies, we characterized the exact optimal winning probability and demonstrated achievability using a particular strategy.

The key to our results was Lemma \ref{lem:subspacePermutations}, which, to the authors' knowledge, is a novel classical result regarding permutations on $\Gr_2(n,k)$ with the property that the intersection of any subspace and its image has a constant dimension. The Lemma is used to optimally pair subspaces $V,W$ whose projectors $\Pi^V, \Pi^W$ have minimal overlap in \eqref{eq:SumOfPi}. In fact, Lemma \ref{lem:subspacePermutations} gives the strongest possible result on the optimal winning rate using the techniques of \cite{Game:tomamichel2013monogamy,Game:culf2022monogamy} to bound the optimal winning probability (Lemmas \ref{lem:NormOfSum} and \ref{lem:prodOfSumOfCosetStates}). This is true because for subspaces $V,W \in \Gr_2(n,k)$, the smallest obtainable upper bound on the overlaps of $\Pi^V$, $\Pi^W$ using Lemma \ref{lem:prodOfSumOfCosetStates} is $\sqrt{2^{-\min \{k,n-k\}}}=2^{-\frac{1}{2}nR^*}$, which is exactly the obtained upper bound for $\omega(R)^n$.

Combining the results of Sections \ref{sec:thm}, \ref{sec:other} and specializing to the case $R=\frac{1}{2}$, we strengthen the results from \cite{QMoney:coladangelo2021hidden,Game:culf2022monogamy},  showing that the optimal winning probability $p_n$ for the Coset Monogamy Game of \cite{QMoney:coladangelo2021hidden,Game:culf2022monogamy} satisfies $2^{-\frac{1}{2}n} \leq p_n \leq 2^{-\frac{1}{4}n}$. However, as pointed out in \cite{BlackHole:Poremba2024}, the Coset Monogamy Game is not strong enough as an uncloneable cryptographic primitive (specifically, a cloning game \cite{ananth2023cloning}) to give adversaries a trivial winning probability of $O(2^{-n})$.

\section*{Acknowledgements}
Special thanks to Vincent Russo for spotting errors in earlier versions of this paper.

\bibliography{Allerton2025}
\bibliographystyle{IEEEtran}
\end{document}